\newcommand{\tr}{\mathrm{tr}}
\newcommand{\rank}{\mathrm{rank}}
\newcommand{\bra}[1]{\left\langle #1 \right|}
\newcommand{\ket}[1]{\left| #1 \right\rangle}
\newtheorem{theorem}{Theorem}[section]
\newtheorem{lemma}[theorem]{Lemma}
\newtheorem{corollary}[theorem]{Corollary}
\newenvironment{proof}[1][Proof]{\begin{trivlist}
\item[\hskip \labelsep {\bfseries #1}]}{\end{trivlist}}
\newenvironment{definition}[1][Definition]{\begin{trivlist}
\item[\hskip \labelsep {\bfseries #1}]}{\end{trivlist}}
\newenvironment{remark}[1][Remark]{\begin{trivlist}
\item[\hskip \labelsep {\bfseries #1}]}{\end{trivlist}}
\newcommand{\qed}{\nobreak \ifvmode \relax \else
      \ifdim\lastskip<1.5em \hskip-\lastskip
      \hskip1.5em plus0em minus0.5em \fi \nobreak
      \vrule height0.75em width0.5em depth0.25em\fi}
\begin{document}
\pacs{03.67.Pp, 05.30.Pr, 71.10.Pm}
\title{Parafermion stabilizer codes}
\author{Utkan G\"{u}ng\"{o}rd\"{u}, Rabindra Nepal, Alexey A. Kovalev}
\affiliation{Department of Physics and Astronomy and Nebraska Center for Materials
and Nanoscience, University of Nebraska, Lincoln, Nebraska 68588,
USA}
\date{\today}
\begin{abstract}
We define and study parafermion stabilizer codes which can be viewed as generalizations of Kitaev's one dimensional model of unpaired Majorana fermions. Parafermion stabilizer codes can protect against low-weight errors acting on a small subset of parafermion modes in analogy to qudit stabilizer codes. Examples of several smallest parafermion stabilizer codes are given. A locality preserving embedding of qudit operators into parafermion operators is established which allows one to map known qudit stabilizer codes to parafermion codes. We also present a local 2D parafermion construction that combines topological protection of Kitaev's toric code with additional protection relying on parity conservation. 
\end{abstract}  
\maketitle

\section{Introduction}
Topologically protected systems are potentially useful for realizations of  fault tolerant elements in a quantum computer~\cite{Kitaev2003,Nayak.Simon.ea:RoMP2008}.
The zero temperature stability of such systems leads to exponential suppression of decoherence induced by local environmental perturbations. On the other hand, the manipulation of the degenerate ground state can be achieved by braiding operations with non-Abelian anyons~\cite{Moore.Seiberg:CiMP1989,Witten:CiMP1989}. 

The Kitaev chain provides an enlightening example of how interactions can result in non-Abelian quasiparticles~\cite{Kitaev2001}. Networks of one dimensional realizations of such quasiparticles can be employed for realizations of quantum gates via braiding operations~\cite{Alicea.Oreg.ea:NP2011,Clarke.Sau.ea:2011}. However, only a non-universal set of quantum gates can be realized with Majorana zero modes. A generalization of Kitaev chain model has been proposed recently where quasiparticles obey parafermion $\mathbb{Z}_D$ algebra as opposed to $\mathbb{Z}_2$ algebra for Majorana zero modes~\cite{Fendley2012}. Many recent publications address possible realizations of parafermion zero modes~\cite{Barkeshli.Qi:PRX2012,Lindner.Berg.ea:PRX2012,Clarke2013,Cheng:2012,Vaezi:2013,Barkeshli.Jian.ea:2013,Barkeshli.Jian.ea:2013b,Hastings.Nayak.ea:2013,Oreg.Sela.ea:2014,Burrello.vanHeck.ea:2013,Mong.Clarke.ea:PRX2014,Klinovaja2014,Tsvelik2014,Ortiz2012,Nussinov2008,Schulz2012,Bullock2007,Vaezi2014,Nigg2014,Bondesan2013,
Motruk2013}. 
The braiding properties 
of parafermion systems have some advantages over 
the 
Majorana modes, while still remaining  non-universal~\citep{Clarke2013,Lindner.Berg.ea:PRX2012,Hastings.Nayak.ea:2013}. However, parafermion systems can be used for obtaining quasiparticles that permit universal quantum computations~\citep{Mong.Clarke.ea:PRX2014}. 

The presence of finite temperature introduces inevitable errors and in principle requires continuous error correction~\cite{Gottesman1997}. `Self correcting' quantum memories are stable at finite temperatures~\cite{Dennis2002,Bacon:2006}; however, they cannot be realized in two dimensions with local interactions~\cite{Bravyi.Terhal:NJoP2009,Landon-Cardinal.Poulin:PRL2013}. Parafermion stabilizer codes considered here can protect against low-weight fermionic errors, i.e. errors that act on a small subset of parafermion modes. The measurement and manipulation schemes required for code implementations have been formulated for Majorana zero modes~\cite{Sau.Tewari.ea:2010,Hassler.Akhmerov.ea:NJoP2010,Jiang.Kane.ea:PRL2011} and should in principle generalize to parafermion zero modes~\cite{Clarke2013}.

In this paper, we address the possibility of active error correction in systems containing a set of parafermion modes as opposed to typical systems containing qubits or qudits. Earlier works on quantum error correction usually addressed the qubit case with a Hilbert space dimension $D=2$~\cite{Shor:1995,Knill.Laflamme:1997,Steane:PRL1996,Gottesman1997}. Error correction on qudits with $D>2$ has also been considered and qudit stabilizer codes have been introduced~\cite{Rains:ITITo1999,Ashikhmin.Knill:ITIT2001,Schlingemann.Werner:2002,GRASSL.BETH.ea:IJoQI2004,Looi.Yu.ea:PRA2008,Hu.Tang.ea:2008,Gheorghiu2010}. The formalism is usually applied to situations in which $D$ is prime or a prime power~\cite{Ashikhmin.Knill:ITIT2001,Ketkar2006,Chen.Zeng.ea:PRA2008} while generalizations to composite $D$ are also possible~\cite{Gheorghiu2014}.   

Parafermion codes can be also interpreted in terms of term-wise commuting Hamiltonians of interacting parafermion zero modes, thus generalizing the Kitaev's one-dimensional (1D) model of unpaired Majorana fermions to $D>2$ case and to arbitrary interactions preserving the commutativity of terms in the Hamiltonian. Of particular interest are the Hamiltonians corresponding to geometrically local interactions on a $d$-dimensional lattice. 
Thus, one can ask similar questions to those posed in Ref.~\cite{Bravyi2010} in relation to Majorana codes, i.e. what is the role  of superselection rules in the finite temperature stability of topological order defined by interacting parafermion modes. Such superselection rules are characteristic to fermionic systems when only interactions with bosonic environments are present. On the other hand, the superselection rule prohibiting parity violating
error operators is not likely to always hold, for instance, when the environment supports
gapless fermionic modes that can couple to the system~\cite{Rainis:may2012,Burnell:dec2013}. Parafermion stabilizer codes can help in such situations by providing protection associated with the code distance of parity violating logical operators. 

The paper is organized as follows. In Section II, we introduce notations and provide background on the theory of qudit stabilizer codes. Here we also discuss the Jordan-Wigner transformation which leads us to introduction of parafermion operators. In Section III, we give formal definition of parafermion stabilizer codes and establish their basic properties. We also discuss the commutativity condition on stabilizer generators, define the code distance, and prove basic results on the dimension of the code space. In Section IV, we present several examples of the smallest parafermion stabilizer codes. In Section V, we construct mappings between qudit stabilizer codes and parafermion stabilizer codes. By employing such mappings, we are able to construct parafermion toric code with adjustable degree of protection against the parity violating errors. Finally, we give our conclusions in Section VI.

\section{Background}

\subsection{Qudits}
Qudits are $D$-dimensional generalizations of qubits, and generally implemented using $D$-level physical systems. One of the well-known generating 
sets for qudit operations is constructed by the generators of the finite \emph{discrete Weyl group} $\mathcal W_D$ that obey the defining relations \cite{weyl1950theory,schwinger2001quantum} 
\begin{align}
X^D = Z^D = \openone, \qquad Z X = \omega X Z.
\label{qcom}
\end{align}
This group is sometimes referred to as discrete Heisenberg group \cite{Ashikhmin.Knill:ITIT2001}, and the generators are sometimes referred to as generalized Pauli matrices \cite{Gheorghiu2014}.
By diagonalizing one of these operators, say $Z$, one obtains the $D$-dimensional representation
\begin{align}
 X = \sum_{j=0}^{D-1} \ket{j+1}\bra{j},\qquad Z = \sum_{j=0}^{D-1} \omega^j \ket{j}\bra{j},
\end{align}
where $\omega = e^{2 \pi i/D}$ and the addition $j+1$ is in mod $D$.
Above and throughout the paper, $\openone$ denotes the identity operator with proper dimensions. Products of $X$ and $Z$ span the Lie algebra $\mathfrak{su}(D)$, hence their linear combinations can generate universal SU($D$) operations.
Operations on multiple qudits are tensor products of the single-qudit operators, hence operators acting on distinct qudits commute.
We will denote an $X$ operator acting on the $j$th site as $X_j$ which is equivalent to an $X$ operator at the $j$th slot of the tensor 
product padded with identity operators: $X_j = \openone \otimes \ldots \otimes X \otimes \ldots \otimes \openone$ (and similar for $Z_j$).

\subsection{Stabilizer codes for qudits}
Stabilizer codes are an important class of quantum error-correcting codes~\cite{Gottesman1997,Gottesman1996} which, under appropriate mapping, can be also thought of as additive classical codes~\cite{Calderbank98quantumerror}. Stabilizer codes utilize a set of commuting operators, called the stabilizer group, for defining the code space. In this section, 
we review the stabilizer formalism for qudits (see e.g. \cite{Gheorghiu2014}).
Let $\mathcal{S}$ be a maximal Abelian subgroup of $\mathcal{W}_D^{\otimes n}$ that does not contain $\omega^j \openone$ ($j \in \mathbb{Z}_D$ and $ j \neq 0$)
and  $C_\mathcal{S}$ be the code subspace of the Hilbert space stabilized by all the elements of $\mathcal S$, i.e. $S_i \ket{\psi}=\ket{\psi}$ 
$\forall S_i \in \mathcal{S}$ and $\ket{\psi} \in C_{\mathcal S}$,
then $\mathcal{S}$ is called the  stabilizer group and it is generally denoted by its generating set $\mathcal S = \langle S_1, S_2, ....., S_k \rangle$. 

Since the stabilizer group $\mathcal{S}$ is an Abelian group, its elements must commute with each other by definition.
The commutativity condition of its generators depends upon the particular case of $\mathcal W_D^{\otimes n}$ at hand. 
Two arbitrary elements of $\mathcal W_D^{\otimes n}$, $G = \omega^{\lambda}X^{\bm{u}}Z^{\bm{v}}$ and 
$G' = \omega^{\lambda'}X^{\bm{u'}}Z^{\bm{v'}}$
where $X^{\bm{u}} =X_1^{u_1}X_2^{u_2}....X_n^{u_n}$, $Z^{\bm{v}} =Z_1^{v_1}Z_2^{v_2}....Z_n^{v_n}$ (and similar for $G'$) will commute iff
\begin{align}
\bm{u}.\bm{v'} = \bm{v}.\bm{u'}   \mod{D}
\end{align}
is satisfied \cite{Gheorghiu2014}.

The support  of a Weyl operator $w \in \mathcal{W}_D^{\otimes n}$, denoted as $\text{Supp}(w)$, is defined as the set of qudits on which it acts non-trivially. The cardinality of the support, $|\text{Supp}(w)|$, is called the weight of the
 operator $w$, also denoted as $|w|$. The set of all Weyl operators in $\mathcal{W}_D^{\otimes n}$ that commute with all the elements of $\mathcal S$ is called the \emph{centralizer} of 
$\mathcal S$ and is denoted as $\mathcal{C}(S)$.

For prime $D$, a stabilizer group with $n-k$ independent generators implies that the corresponding centralizer is generated by $n+k$ generators. 
The logical operators $\{\bar{X},\bar{Z} \}$ of a stabilizer code $\mathcal S$ are the elements of $\mathcal{C}(\mathcal S)$ that are not in $\mathcal S$. 

The robustness of a quantum code can be measured by how far two encoded states are apart, which is quantified through the notion 
of \emph{distance}. The weight of the logical operators imply the separation of the encoded states. Therefore, the distance of a stabilizer 
code is defined  as,
\begin{equation}
d = \min_{L_i \in \mathcal C(\mathcal S)\setminus \mathcal S}|L_i|.
\end{equation}
The longer is the code distance the better protection the code provides. A
code of distance $d$ can detect any error of weight up to $d-1$, and correct up to $\lfloor d/2\rfloor$.
A quantum error-correcting code that encodes $n$ physical qudits into $k$ logical qudits with distance $d$ is denoted as $[[n,k,d]]_D$.

\subsection{Parafermion operators}

Parafermion operators can be obtained by the Jordan-Wigner transformation of the $D$-state spin operators $\{X_j,Z_j\} \in \mathcal W_D^{\otimes n}$ as,
\begin{align}
\gamma_{2j-1} &= \left(\prod_{k=1}^{j-1}X_{k}\right)Z_j, \nonumber \\ 
\gamma_{2j}  &= \omega^{(d-1)/2} \left(\prod_{k=1}^{j-1}X_{k}\right) Z_j X_{j}.
\label{wj}
\end{align}
which is a mapping of $n$ \emph{local} spin operators into $2n$ \emph{non-local} parafermion operators, therefore, the total number of parafermion modes  is always even.
Parafermion operators $\gamma_j$ obey the following relations:
\begin{align}
\gamma_j^d = \openone, \quad
\gamma_j \gamma_k = \omega \gamma_k \gamma_j \quad (j<k, \quad \omega = e^{i 2\pi/D})
\label{comm}
\end{align}
Special case with $D=2$ gives us the anti-commuting self-adjoint Majorana fermions.

Realizations of parafermion zero modes corresponding to Eq.~(\ref{comm}) have been suggested. In such realizations, the localized state is described by parafermion operator which commutes with the corresponding Hamiltonian and changes the parity of $\mathbb Z_D$ charge by 1~\cite{Fendley2012}.
They are non-Abelian anyons and can be used for realizations of fault-tolerant topological quantum gates.

There are recent proposals to construct
solid state systems that accommodate parafermion zero modes. Realizations employing  exotic fractional quantum Hall (FQH) states and quantum nanowires have been proposed~\cite{Barkeshli.Qi:PRX2012,Lindner.Berg.ea:PRX2012,Clarke2013,Cheng:2012,Vaezi:2013,Barkeshli.Jian.ea:2013,Barkeshli.Jian.ea:2013b,Hastings.Nayak.ea:2013,Oreg.Sela.ea:2014,Burrello.vanHeck.ea:2013,Mong.Clarke.ea:PRX2014,Klinovaja2014,Tsvelik2014}.

\section{Parafermion stabilizer codes}
\subsection{The group PF($D,2n$)}
We shall call the group generated by the single-mode operators $\gamma_j$ given in Eq. (\ref{comm})
the parafermion group PF($D,2n$). Arbitrary elements of PF($D,2n$) can be written as $\omega^\lambda \gamma^{\bm{\alpha}}$ where $ \lambda \in \mathbb{Z}_D$ and
 \begin{equation}
\gamma^{\bm{\alpha}} = \gamma_1^{\alpha_1}\ldots\gamma_{2n}^{\alpha_{2n}}
\label{rel}
\end{equation}
with $\bm{\alpha} = (\alpha_1, \ldots, \alpha_{2n}) \in \mathbb{Z}_D^{2n}$ and by convention the terms are  arranged in increasing order in their indices. The ordered set of non-zero elements in $\bm{\alpha}$ is called the support of $\gamma^{\bm{\alpha}}$, or $\text{Supp}(\gamma^{\bm{\alpha}})$. We define the weight of $\gamma^{\bm{\alpha}}$ as the number of non-zero entries in $\bm{\alpha}$, denoted as $|\text{Supp}(\gamma^{\bm{\alpha}})|$ or simply $|\gamma^{\bm{\alpha}}|$.

A parafermion operator $\omega^\lambda \gamma^{\bm{\alpha}} \in \text{PF}(D,2n)$ will preserve parity iff
\begin{align}
\sum_{i=1}^{2n} \alpha_i = 0 \mod D.
\label{eq:parity}
\end{align}

One can generalize Eq. (\ref{comm}) to obtain $\gamma_i^m \gamma_j^n = \omega^{mn} \gamma_j^n \gamma_i^m$ for $i<j$. Using this, it can be shown that two parafermion operators $\gamma^{\bm{\alpha}}$ and $\gamma^{\bm{\beta}}$ commute iff
\begin{equation}
\bm{\alpha} \Lambda \bm{\beta}^T = 0 \mod D
\label{parafermions-commutation}
\end{equation}
is satisfied where
$\Lambda$ is a $2n \times 2n$ anti-symmetric matrix $\Lambda_{ij} = \text{sgn}(j-i)$ or explicitly
\begin{eqnarray}
\Lambda =
\begin{pmatrix}
0&1&1\ \dots\  & 1\\
-1&0&1\ \dots\   & 1\\
-1&-1&0\ \dots    &1\\
\vdots&\vdots&\vdots&\vdots \\
-1&-1&-1\ \dots\ & 0
\end{pmatrix}.
\end{eqnarray}

In particular, when the index of the last non-zero entry in ${\bm \alpha}$ is smaller than the index of the first non-zero entry in ${\bm \beta}$, the commutativity condition Eq. (\ref{parafermions-commutation}) is reduced to
\begin{align}
\left(\sum_j \alpha_j\right)\left(\sum_j \beta_j\right) = 0 \mod D.
\end{align}

The parity-conservation condition for a parafermion operator can also be expressed in terms of the $\mathbb Z_D$ charge operator
\begin{align}
Q = \prod_{j=1}^n \gamma_{2j-1}^\dagger \gamma_{2j}.
\end{align}
For any $\gamma^{\bm{\alpha}} \subset \text{PF}(D,2n)$
\begin{align}
\gamma^{\bm{\alpha}} Q  = \omega^p Q \gamma^{\bm{\alpha}}, \qquad p = \sum_{i=1}^{2n} \alpha_i \mod D,
\end{align}
where $p$ is the $\mathbb Z_D$ charge of $\gamma^{\bm{\alpha}}$, thus the parity-conservation condition can also be written as $[\gamma^{\bm{\alpha}}, Q]=0$.

Since Majorana zero modes correspond to the $D=2$ case, evidently we have $\text{PF}(2,2n) \cong \text{Maj}(2n)$.

\subsection{Stabilizer groups in PF($D,2n$)}
It is not generally possible to map parafermion operators in PF($D,2n$) onto qudit operators in $\mathcal W_D^{\otimes k}$ due to the non-locality of parafermion operators. The tensor product structure of $k$-qudit operators in $\mathcal W_D^{\otimes k}$ guarantees that operators acting on different sites commute, whereas parafermion operators fail to commute for all distinct sites. Nevertheless, even though a one-to-one mapping between a single-mode parafermion operator and a qudit operator is impossible, it is indeed possible to map multiple parafermion modes onto multiple qudits at once (see subsection \ref{sec:parafermion-stabilizer-codes}) or to map multiple parafermion modes onto a local single-qudit in a consistent way (see Section \ref{sec:qudits-from-parafermions}). Indeed, as we observe in the next section, PF($D,2n$) proves to be rich group with many non-trivial Abelian subgroups.

\begin{definition}
Parafermion stabilizer codes $C_{\mathcal S_{PF}}$, similar to qudit stabilizer codes, are completely determined by their corresponding stabilizer group, which in our case is $\mathcal S_{PF}\subseteq \text{PF}(D,2n)$. We list the defining properties of parafermion stabilizer codes as:
\begin{itemize}
\item Elements of $\mathcal S_{PF}$ are parity-preserving operators.
\item $\mathcal S_{PF}$ is an Abelian group not containing $\omega^ j \openone$ where $j \in \mathbb Z_D$ and $j \neq 0$.
\end{itemize}
Whether these conditions hold for a given parafermion stabilizer code or not can be verified using Eqs. (\ref{eq:parity}) and (\ref{parafermions-commutation}) respectively.
\end{definition}

The set of all parafermion operators in $\text{PF}(D,2n)$ which commute with all the elements of $\mathcal S_{PF}$ is called the \emph{centralizer} of $\mathcal S_{PF}$ and is denoted as 
$\mathcal{C}(\mathcal S_{PF})$. The set of logical operators $\mathcal L(\mathcal S_{PF})$ encoding $k$ qudits of a parafermion code $\mathcal S_{PF}$ are the elements of $\mathcal{C}(\mathcal S_{PF})$ that are not in $\mathcal S_{PF}$, that is $\mathcal L(\mathcal S_{PF}) = \mathcal C(\mathcal S_{PF})\setminus \mathcal S_{PF}$. When $D$ is a prime number, the order of the generating set (excluding the identity operator) of $\mathcal S_{PF}$ is $n-k$ and the centralizer is generated by $n+k$ generators.

When writing the generating sets explicitly, we will omit the phase factors $\omega^l$ ($l \in \mathbb Z_D$) for all generators for brevity throughout the paper, however, one should keep in mind that such phase factors are in general required in order to satisfy the second defining property of parafermion codes listed above.

The codespace of a parafermion stabilizer code $\mathcal S_{PF}$ is the subspace that is invariant under the action of all the elements of $\mathcal S_{PF}$.

The distance $d$ of a parafermion code is given by the minimum weight of its logical operators, 
\begin{equation}
d = \min_{\mathcal{\gamma^{\bm \alpha}}\in \mathcal L(\mathcal S_{PF})}|\gamma^{\bm \alpha}|.
\end{equation}
We denote a parafermion stabilizer code that encodes $2n$ parafermion modes into $k$ logical qudits with distance $d$ as $[[2n,k,d]]_D$. A parafermion stabilizer code of distance $d$ can detect any parafermion error of weight up to $d-1$, and correct up to $\lfloor d/2\rfloor$ in analogy to qudit codes.
However, it should be noted that similar to Majorana fermion codes \cite{Bravyi2010} the robustness of parafermion codes is not solely determined by the code distance $d$: when some of the logical operators have non-zero parity, the conservation of parafermion parity will offer additional protection, that is, a subspace of the codespace will be protected against such errors. Following Ref.~\cite{Bravyi2010} we introduce an additional parameter $l_\text{con}$ defined as the minimum diameter of a region that can
support a parity conserving logical operator:
\begin{equation}
l_\text{con} = \min_{\substack{
\mathscr{ \gamma^{\bm \alpha}}\in \mathcal L(\mathcal S_{PF})  \\
\sum_i\alpha_i=0\mod D
 }}  \text{diam}[\text{Supp}(\gamma^{\bm \alpha})],
\end{equation}
which can be used in order to measure the degree of protection relying on the superselection rules.




What can be said about the order of $\mathcal S_{PF}$? Below, we adapt the theorem and proof given by Gheorghiu \cite{Gheorghiu2014} to parafermion stabilizer codes.
\begin{theorem}[Gheorghiu]
Let $\mathcal S_{PF}$ be a parafermion stabilizer code in PF($D,2n$) where $D$ is allowed to be composite, let $|\mathcal S_{PF}|$ denote the order of $\mathcal S_{PF}$ and let $|C_{\mathcal S_{PF}}|$ be the dimension of codespace. Then the following equation holds:
\begin{align}
 |C_{\mathcal S_{PF}}| |\mathcal S_{PF}| = D^n.
\label{eq:dimensionality} 
\end{align}
\end{theorem}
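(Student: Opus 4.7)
The strategy is the standard projector-trace calculation from qudit stabilizer theory, adapted to parafermions; what takes genuine care is showing that every non-identity element of $\mathcal{S}_{PF}$ is traceless when $D$ is allowed to be composite.

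First I would introduce the operator
\[
P = \frac{1}{|\mathcal{S}_{PF}|}\sum_{S\in \mathcal{S}_{PF}} S
\]
acting on the $D^{n}$-dimensional Hilbert space carrying the Jordan--Wigner representation of $\mathrm{PF}(D,2n)$. Because $\mathcal{S}_{PF}$ is a group of unitaries, closure under multiplication gives $P^{2}=P$ while closure under inversion gives $P^{\dagger}=P$, so $P$ is an orthogonal projector. Every vector in $\mathrm{Im}(P)$ is a simultaneous $+1$-eigenvector of all $S\in\mathcal{S}_{PF}$, and any such eigenvector is in turn fixed by $P$, so $\mathrm{Im}(P)=C_{\mathcal{S}_{PF}}$ and
\[
|C_{\mathcal{S}_{PF}}| \;=\; \tr(P) \;=\; \frac{1}{|\mathcal{S}_{PF}|}\sum_{S\in \mathcal{S}_{PF}}\tr(S).
\]

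The theorem then reduces to showing $\tr(S)=0$ for every non-identity $S\in\mathcal{S}_{PF}$, since only $S=\openone$ would then contribute and $\tr(\openone)=D^{n}$. Writing $S=\omega^{\lambda}\gamma^{\bm{\alpha}}$, the case $\bm{\alpha}=0$ with $\lambda\neq 0$ is excluded by the second defining property of $\mathcal{S}_{PF}$, so I may assume $\bm{\alpha}\not\equiv 0\pmod D$. The plan is to produce a partner $\gamma^{\bm{\beta}}\in\mathrm{PF}(D,2n)$ with $\bm{\alpha}\Lambda\bm{\beta}^{T}\equiv p\not\equiv 0\pmod D$. By Eq.~(\ref{parafermions-commutation}) this $\gamma^{\bm{\beta}}$ fails to commute with $\gamma^{\bm{\alpha}}$, yielding $(\gamma^{\bm{\beta}})^{-1}\gamma^{\bm{\alpha}}\gamma^{\bm{\beta}}=\omega^{p}\gamma^{\bm{\alpha}}$; cyclicity of the trace then gives $\tr(\gamma^{\bm{\alpha}})=\omega^{p}\tr(\gamma^{\bm{\alpha}})$, forcing $\tr(\gamma^{\bm{\alpha}})=0$.

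The main obstacle, and the only place where the composite-$D$ hypothesis really bites, is the existence of such a $\bm{\beta}$: equivalently, that $\Lambda$ has trivial left kernel over $\mathbb{Z}_{D}$. I would establish this by showing $\mathrm{Pf}(\Lambda)=\pm 1$ via a short induction on $n$---expanding the Pfaffian along the first row produces minors with the same $\pm 1$ structure, yielding a recursion that reduces to the trivial case $\mathrm{Pf}(\Lambda_{1})=1$. With $\Lambda$ invertible modulo any $D$, $\bm{\alpha}\Lambda\not\equiv 0$ whenever $\bm{\alpha}\not\equiv 0$, and taking $\bm{\beta}$ to be a standard basis vector isolating a non-vanishing component of $\bm{\alpha}\Lambda$ delivers the required $p$. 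Combining the three ingredients yields $|C_{\mathcal{S}_{PF}}|\,|\mathcal{S}_{PF}|=D^{n}$.
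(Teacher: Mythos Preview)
Your proposal is correct and follows essentially the same projector--trace argument as the paper. You actually go further than the paper by supplying a proof of the tracelessness of non-identity parafermion operators (via the invertibility of $\Lambda$ over $\mathbb{Z}_D$), a fact the paper simply asserts without justification.
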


\begin{proof}
The operator
 \begin{align}
  P = \frac{1}{|\mathcal S_{PF}|} \sum_{j_=1}^{|\mathcal S_{PF}|} S_j.
 \end{align}
 is a projection operator satisfying $P^2 = P = P^{\dagger}$. Clearly, for any $\ket{\psi_j} \in C_{\mathcal S_{PF}}$, $P \ket{\psi_j} = \ket{\psi_j}$ holds. Thus the subspace $\mathcal W$ which $P$ projects onto includes $C_{\mathcal S_{PF}}$, or $C_{\mathcal S_{PF}} \subseteq \mathcal W$.
 
 Next we show that this relation holds the other way around. Let $\ket{\phi}$ be an arbitrary element of $\mathcal W$ (thus $P \ket{\phi} = \ket{\phi}$) and $S_k$ be an arbitrary element of $\mathcal S_{PF}$. Since $S_k P = P$ for all $k$, we obtain $S_k (P \ket{\phi}) = P \ket{\phi}$, meaning all $\ket{\phi} \in \mathcal W$ is stabilized by $\mathcal S_{PF}$ or $\mathcal W  \subseteq C_{\mathcal S_{PF}}$, leading us to the conclusion that $\mathcal W = C_{\mathcal S_{PF}}$. The dimension of the codespace is then given as $\tr(P)$. Since $\mathcal S_{PF}$ is an Abelian group and the trace condition $\tr(\gamma^{\bm{\alpha}})=0$ when $\gamma^{\bm{\alpha}} \neq \openone$ and $\tr(\openone)=D^n$ for $\gamma^{\bm{\alpha}}, \openone \in \text{PF}(D,2n)$ holds, we arrive at the result
 \begin{align}
  \tr(P) = |C_{\mathcal S_{PF}}| = \frac{1}{|\mathcal S_{PF}|} D^n.
 \end{align}

\end{proof}

\begin{corollary}
When $D$ is a prime power $p^l$, $|C_{\mathcal S_{PF}}|=p^{lk}$ and $|\mathcal S_{PF}|=p^r$ with $r=l(n-k)$ (we refer to \cite{Ashikhmin.Knill:ITIT2001} for a detailed derivation).
\end{corollary}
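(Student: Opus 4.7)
The proof I would write starts with the identity already supplied by the preceding theorem: $|C_{\mathcal S_{PF}}|\,|\mathcal S_{PF}| = D^n = p^{ln}$. Since both factors are positive integer divisors of $p^{ln}$, each is automatically a power of $p$, so I may write $|\mathcal S_{PF}| = p^r$ and $|C_{\mathcal S_{PF}}| = p^{ln-r}$ for some $0 \le r \le ln$. The content of the corollary is then the arithmetic claim $l \mid r$, from which $r = l(n-k)$ with $k := n - r/l$ and $|C_{\mathcal S_{PF}}| = p^{lk}$ follow immediately. So everything reduces to showing that $|\mathcal S_{PF}|$ is itself a power of $D$.

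To that end, I would quotient out the central phase subgroup: modulo $\langle \omega\openone\rangle$, the elements of $\mathcal S_{PF}$ correspond to their exponent vectors $\bm\alpha \in \mathbb{Z}_D^{2n}$, and by Eq.~(\ref{parafermions-commutation}) they form a $\Lambda$-isotropic submodule $V \subseteq \mathbb{Z}_D^{2n}$. Because $D = p^l$ is a prime power, $\mathbb{Z}_D$ is a local principal ideal ring with residue field $\mathbb{F}_p$, so I can put a generator matrix for $V$ into Smith normal form with invariant factors $p^{e_1} \mid p^{e_2} \mid \cdots \mid p^{e_m}$, with $1 \le e_i \le l$. This gives $|V| = p^{\sum_i e_i}$, and reduces the divisibility claim to showing that one may always choose $e_i = l$ for every $i$, i.e.\ a set of $m = n-k$ independent stabilizer generators each of full multiplicative order $D$.

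The main step, which is what Ashikhmin-Knill carry out in their paper, is this lifting: I would reduce $\Lambda$ modulo $p$ to obtain a non-degenerate antisymmetric form $\bar\Lambda$ on the $\mathbb{F}_p$-vector space $\mathbb{F}_p^{2n}$, observe that $V/pV$ is isotropic under $\bar\Lambda$, choose a hyperbolic symplectic basis of $\mathbb{F}_p^{2n}$ adapted to $V/pV$, and then Hensel-lift the isotropic basis of $V/pV$ back to representatives in $\mathbb{Z}_D^{2n}$ that remain isotropic mod $D$ and generate $V$ with each generator of order exactly $D$. Together with an appropriate choice of phase prefactors $\omega^{\lambda_i}$ (to avoid $\omega^j\openone \in \mathcal S_{PF}$), this produces $n-k$ independent generators of order $D$, so $|\mathcal S_{PF}| = D^{n-k} = p^{l(n-k)}$, and then the theorem gives $|C_{\mathcal S_{PF}}| = p^{lk}$.

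The hard part is precisely this lifting step: one must simultaneously preserve isotropy with respect to the antisymmetric $\Lambda$, preserve $\mathbb{Z}_D$-independence, and arrange that every lifted generator has full order $D$ rather than some smaller $p^{e_i}$. The prime-power hypothesis is essential here, since only then does $\mathbb{Z}_{p^l}$ possess a unique maximal ideal and a well-behaved residue field $\mathbb{F}_p$ on which the classification of symplectic spaces applies cleanly; for composite $D$ with more than one prime factor one must decompose by the Chinese remainder theorem and apply the argument prime by prime, which is why Gheorghiu's more general treatment is invoked in the body of the text.
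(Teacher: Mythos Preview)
The paper supplies no argument of its own for this corollary; the parenthetical reference to Ashikhmin--Knill \emph{is} the entirety of the proof. There is therefore nothing in the paper to compare your approach against, and your proposal already goes well beyond what the authors wrote by attempting to isolate and prove the nontrivial content, namely that $|\mathcal S_{PF}|$ is a power of $D$ and not merely of $p$.

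That said, your lifting step contains a real gap, and in fact the claim as literally read over $\mathbb{Z}_{p^l}$ need not hold. Take $D=4$, $n=1$, and $\mathcal S_{PF}=\langle \gamma_1^{2}\gamma_2^{2}\rangle$. The exponent vector $(2,2)\in\mathbb{Z}_4^{2}$ is parity-preserving and $\Lambda$-isotropic, and one checks directly that $(\gamma_1^{2}\gamma_2^{2})^2=\openone$, so $|\mathcal S_{PF}|=2$ and $|C_{\mathcal S_{PF}}|=2$; neither is a power of $D=4$. In your framework the submodule $V=\{(0,0),(2,2)\}$ reduces to $\{0\}$ modulo $p=2$, so there is nothing nonzero in $\mathbb{F}_2^{2}$ to Hensel-lift, and no element of $V$ has order $4$. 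The assertion that one may always arrange $e_i=l$ for every invariant factor is simply false for isotropic submodules of $\mathbb{Z}_{p^l}^{2n}$.

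The underlying issue is that Ashikhmin--Knill work over the \emph{field} $GF(p^l)$, not the ring $\mathbb{Z}_{p^l}$. In their setting the exponent data live in an $\mathbb{F}_{p^l}$-vector space, any stabilizer modulo phases is an $\mathbb{F}_{p^l}$-subspace, and its order is automatically a power of $p^l$; no lifting is needed. The parafermion group of this paper, by contrast, is built over $\mathbb{Z}_D$ via $\gamma_j^{D}=\openone$ and $\omega=e^{2\pi i/D}$. The corollary therefore implicitly imports either the $GF(p^l)$ framework or an extra hypothesis (for instance, that $\mathcal S_{PF}$ is generated by elements of full order $D$), neither of which your argument states or can manufacture from the Hensel-lifting sketch alone. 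Your diagnosis that the content is precisely $l\mid r$ is correct; what is missing is the recognition that over $\mathbb{Z}_{p^l}$ this divisibility can fail, so the proof must begin by making the additional hypothesis explicit rather than by trying to derive it.
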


In later sections, we will also use a matrix form of the stabilizer code $\mathcal S_{PF} = \langle S_1, \ldots, S_l \rangle = \langle \gamma^{\bm{\alpha}_1}, \ldots, \gamma^{\bm{\alpha}_l} \rangle$ whose rows are given by $\bm{\alpha}_i$, that is
\begin{align}
 S_{PF} = \begin{pmatrix}
      \bm{\alpha}_1 \\
      \vdots \\
      \bm{\alpha}_l
     \end{pmatrix}.
\end{align}
The same construction is also extended for the logical operators, yielding the matrix $L_{PF}$.
Since $\mathcal S_{PF}$ is an Abelian group, due to Eq. (\ref{parafermions-commutation}), we have $S_{PF} \Lambda S_{PF}^T = 0 \mod D$. The logical operator matrix $L_{PF}$ on the other hand obeys the relations $L_{PF} \Lambda S_{PF}^T = 0$ and $L_{PF} \Lambda L_{PF}^T \neq 0$ in $\mod D$.

\section{Examples of parafermion stabilizer codes}
\subsection{3-state quantum clock model}
We present a simple example of parafermion code starting from a 3-state quantum clock model Hamiltonian (for $h=0$):   
\begin{equation}
H_3 = -J \sum_{j=1}^{n-1}(Z_{j}^\dagger Z_{j+1} + Z_{j+1}^\dagger Z_{j}).
\label{ising3}
\end{equation}
By employing the Jordan-Wigner transformation, this Hamiltonian can be rewritten in terms of parafermion operators in the following form:
\begin{equation}
H = iJ \sum_{j=1}^{n-1} (\gamma_{2j}^{\dagger} \gamma_{2j+1} - \gamma_{2j+1}^{\dagger} \gamma_{2j} ),
\label{ising}
\end{equation}
which is known as the Fendley \cite{Fendley2012} generalization of Kitaev chain model.
For $D=2$, Eq.~(\ref{ising3}) reduces to familiar Ising model with $h = 0$.

We form the corresponding  stabilizer group taking individual terms of the Hamiltonian for each value of $j$ as,
\begin{align}
\langle i\gamma_2^\dagger \gamma_3, -i\gamma_3^\dagger \gamma_2, \ldots,i\gamma_{2n-2}^\dagger \gamma_{2n-1},-i\gamma_{2n-1}^\dagger \gamma_{2n-2}\rangle. 
\end{align}
Logical operators of the code can be chosen as $\bar{Z}=\gamma_1$ and $\bar{X}=\gamma_{2n}$. Then the distance of the code is $d=1$. 
But these logical operators are not parity-preserving, we can combine them as  $\gamma_1^\dagger \gamma_{2n}$ and $\gamma_1 \gamma_{2n}^\dagger$
to obtain parity-preserving logical operators. Even though this code does not provide protection against parity violating errors, in the absence of such errors the code protection can be described by the diameter of even logical operators, i.e., $l_\text{con}=2n$.   

\subsection{Minimal parafermion stabilizer codes}
\label{sec:parafermion-stabilizer-codes}
Quantum error-correcting schemes come at the expense of introducing additional qudits in order to protect information encoded into quantum states. The ratio of the number of encoded qudits $k$ (whose state can be restored after decoherence) to the number of underlying physical qudits $n$ is called encoding rate $r=k/n$. The relative distance is defined as $\delta=d/n$. Codes with higher encoding rate $r$ and relative distance are preferable and it is known that both $\delta$ and $r$ can be finite for a particular code family \cite{Calderbank.Shor:1996}. In this section, we discuss the minimal stabilizer codes encoding $k=1$ qudit and try to find codes with the best encoding rate $r$ for the minimal non-trivial distance $d=3$ for prime $D$.

Using exhaustive search, we find that for $D=3$ the smallest non-trivial code requires 8 parafermion modes and results in a $[[8,1,3]]_3$ parafermion stabilizer code:
\begin{align}
\mathcal S_{PF} =& \langle \gamma_1^\dagger \gamma_2 \gamma_4^\dagger \gamma_6, \gamma_2^\dagger \gamma_3 \gamma_5^\dagger \gamma_7, \gamma_3^\dagger \gamma_4 \gamma_6^\dagger \gamma_8 \rangle, \nonumber \\
\mathcal L(\mathcal S_{PF}) = &\langle\gamma_1^\dagger \gamma_2 \gamma_3 \gamma_7, \gamma_2^\dagger \gamma_3^\dagger \gamma_6\rangle
\label{eq:stabilizer-d=3}.
\end{align}
The logical operators generate $\mathcal W_3$, encoding 8 parafermion modes into a single logical qutrit.

Realizations of $D=6$ parafermion zero modes have been proposed recently \cite{Clarke2013}, making this case particularly interesting. Because $D=6$ is not a prime or prime power, the original construction for qudit stabilizer codes \cite{Ashikhmin.Knill:ITIT2001} is not directly applicable. We will instead ``double'' the $D=3$ code given above by squaring all the generators. However, this is a mapping onto a larger space and we need to take care of the additional operators that commute with the new stabilizer generators. The full set of generators for $D=6$ thus becomes
\begin{align}
\mathcal S_{PF} =& \langle \gamma_1^3 \gamma_2^3, \gamma_3^3 \gamma_4^3, \gamma_5^3 \gamma_6^3, \gamma_7^3 \gamma_8^3, \nonumber \\ 
& 
(\gamma_1^\dagger \gamma_2 \gamma_4^\dagger \gamma_6)^2, (\gamma_2^\dagger \gamma_3 \gamma_5^\dagger \gamma_7)^2, (\gamma_3^\dagger \gamma_4 \gamma_6^\dagger \gamma_8)^2
\rangle, \nonumber \\
\mathcal L(\mathcal S_{PF}) = &\langle (\gamma_1^\dagger \gamma_2 \gamma_3 \gamma_7)^2, (\gamma_2^\dagger \gamma_3^\dagger \gamma_6)^2 \rangle.
\end{align}
Since these logical operators behave like $X^2$ and $Z^2$ for $D=6$ qudits, the code above essentially encodes a qutrit using $2n=8$ parafermion zero modes. We also note that this code may not have the best encoding rate for $D=6$.

However, the minimal number of modes depends on $D$. For the case of $D=7$, there exists $[[6,1,3]]_7$ code that requires only $6$ modes:
\begin{align}
\mathcal S_{PF} =& \langle \gamma_1 \gamma_2 \gamma_5^5, \gamma_1 \gamma_4^5 \gamma_6 \rangle, \nonumber \\
\mathcal L(\mathcal S_{PF}) = &\langle \gamma_1^3 \gamma_2^6 \gamma_6, \gamma_1^2 \gamma_2^5 \gamma_3\rangle.
\end{align}
This indicates that there is a minimal $D$ for which the encoding rate is optimal \footnote{Exhaustive search takes exponential time in $D$, thus we were unable to examine $D>7$ cases and determine the optimal $D$. A better algorithm may allow determining this value.}.

\section{Mappings between qudits and parafermion modes}
\label{sec:qudits-from-parafermions}
\subsection{Mappings to and from parafermion codes}
There is an established literature on stabilizer codes for qudits when $D$ is prime or a prime power
\cite{Viyuela2012,Anwar2012}. Recently, some properties of qudit stabilizer codes for non-prime case has been discussed in \cite{Gheorghiu2014}. An isomorphism between 
multi-qudit and multi-parafermion mode operators will let us construct parafermion stabilizer codes based on qudit codes. 
In this section, we establish such an isomorphism by mapping four parafermion modes to a single qudit.

\begin{remark} Let $\tilde X_j$ and $\tilde Z_j$ ($j = 1 \ldots k$) denote the generating operators of $\mathcal W_D^{\otimes k}$ embedded into PF($D,2n$), encoding $k$ qudits into $2n$ parafermion modes. Such an embedding has these properties:
\begin{itemize}
\item Logical qudit operators $\{\tilde X_j, \tilde Z_j\}$ obey Eq. (\ref{qcom}), that is, they generate the embedded Weyl group $\mathcal W_D^{\otimes k} \subseteq \text{PF}(D,2n)$.
\item Logical qudits operators for different sites commute ($[\tilde X_i, \tilde X_j] = [\tilde Z_i, \tilde Z_j] = [\tilde X_i, \tilde Z_j] = 0$ when $i\neq j$).
\item The embedding of $\mathcal W_D^{\otimes k}$ into the larger group PF($D,2n$) may require additional parafermion operators $\{\tilde Q_j^{(i)}\}$ that commute with the original qudit stabilizer group $\mathcal S$ or its corresponding logical operators $\mathcal L(\mathcal S_{PF})$. Such operators must be included in the parafermion stabilizer group $\mathcal S_{PF}$ and hence must preserve parity (an example is given in Eq. (\ref{eq:qudit-isomorphism}) below).
\end{itemize}
\end{remark}
In turns out that the minimum number of parafermion modes required for such an embedding is four, that is four parafermion modes will map to a single qudit. This mapping leads to the following lemma.
\begin{lemma}
Every $[[n,k,d]]_D$ stabilizer code can be mapped onto a $[[4n,k,2d]]_D$ parafermion stabilizer code, encoding 4 parafermion modes into a single qudit.
\end{lemma}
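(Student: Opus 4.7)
The plan is to build an explicit embedding $\mathcal W_D^{\otimes n}\hookrightarrow\text{PF}(D,4n)$ that assigns four consecutive parafermion modes to each qudit, augment the lifted qudit stabilizer with one parity-enforcing ``gauge'' stabilizer per quartet, and then check the three conclusions (Abelian, $k$ logical qudits, distance $2d$) in turn. For qudit $j=1,\ldots,n$ I would define, up to inert phase factors $\omega^\lambda$ chosen to enforce order $D$,
\begin{align*}
\tilde Z_j &= \gamma_{4j-3}^\dagger\gamma_{4j-2},\\
\tilde X_j &= \gamma_{4j-2}^\dagger\gamma_{4j-1},\\
\tilde Q_j &= \gamma_{4j-3}^\dagger\gamma_{4j-2}\gamma_{4j-1}^\dagger\gamma_{4j}.
\end{align*}
All three are parity-preserving on their quartet. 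Writing each as a vector $\bm{\alpha}\in\mathbb Z_D^4$ and using the criterion $\bm{\alpha}\Lambda\bm{\beta}^T\bmod D$, one checks directly that $\tilde Z_j\tilde X_j=\omega\tilde X_j\tilde Z_j$ (so the Weyl relations hold on each qudit), that $\tilde Q_j$ commutes with every $\tilde X_i,\tilde Z_i$, and that operators from different quartets commute (each has vanishing total charge, so the ``disjoint-support'' form of Eq.~(\ref{parafermions-commutation}) vanishes). Hence $X_j\mapsto\tilde X_j$, $Z_j\mapsto\tilde Z_j$ extends to a group embedding of $\mathcal W_D^{\otimes n}$.

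Given the qudit stabilizer $\mathcal S=\langle S_1,\ldots,S_{n-k}\rangle$ I would set $\mathcal S_{PF}=\langle\tilde S_1,\ldots,\tilde S_{n-k},\tilde Q_1,\ldots,\tilde Q_n\rangle$, which is Abelian, parity-preserving, and avoids $\omega^\ell\openone$ with an appropriate phase choice. Since $\mathcal S_{PF}$ has $2n-k$ independent generators on $4n$ modes ($N=2n$ pairs), Theorem~1 gives $|C_{\mathcal S_{PF}}|=D^{2n}/D^{2n-k}=D^k$, so the code encodes $k$ logical qudits. Next I would prove a per-quartet structure lemma: the vectors of $\tilde X_j,\tilde Z_j,\tilde Q_j$ are linearly independent in $\mathbb Z_D^4$ and span the parity subspace (a rank-$3$ row reduction), while a short computation gives $\Lambda\bm{q}^T\equiv(1,1,1,1)\bmod D$, so the centralizer of $\tilde Q_j$ restricted to the $j$th quartet is exactly the parity-preserving subgroup. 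Consequently any $\gamma^{\bm{\mu}}\in\mathcal C(\mathcal S_{PF})$ is quartet-wise parity-preserving and, modulo $\mathcal S_{PF}$, is equivalent to the canonical form $\prod_j\tilde X_j^{a_j}\tilde Z_j^{b_j}$ that lifts the qudit operator $\prod_j X_j^{a_j}Z_j^{b_j}\in\mathcal C(\mathcal S)\setminus\mathcal S$.

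For the distance lower bound, any non-zero parity-preserving parafermion operator on a single quartet has weight at least two, so the canonical representative of a nontrivial logical has weight at least $2|\{j:(a_j,b_j)\neq(0,0)\}|\geq 2d$, and this bound propagates to the whole equivalence class since each $\tilde Q_j$ acts only within its quartet. For the matching upper bound I would lift a weight-$d$ pure-$X$ (or pure-$Z$) qudit logical, whose image is a product of $d$ disjoint weight-$2$ parafermion operators and therefore has weight exactly $2d$. The main technical obstacle is the per-quartet structure lemma: one must argue carefully that commutation with every $\tilde Q_j$ genuinely forces quartet-wise parity preservation and that the three generators span the full parity subgroup, so that no ``exotic'' (for example parity-violating) parafermion logical can bypass the quartet-by-quartet decomposition. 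Fixing the phases $\omega^\lambda$ so that every generator has order exactly $D$ and no $\omega^\ell\openone$ sneaks into $\mathcal S_{PF}$ is a routine bookkeeping step that parallels the qudit case.
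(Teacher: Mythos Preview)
Your approach is essentially the same as the paper's: define an explicit four-modes-per-qudit embedding, adjoin the parity stabilizers $\tilde Q_j$, and read off the parameters. The only cosmetic difference is your choice $\tilde X_j=\gamma_{4j-2}^\dagger\gamma_{4j-1}$ versus the paper's $\tilde X_j=\gamma_{4j-3}^\dagger\gamma_{4j-1}$, which is an equivalent embedding. Your treatment is in fact more complete than the paper's on the distance lower bound: the paper simply asserts ``since every Weyl operator is mapped to a parafermion operator with two modes, the distance of the new code is $2d$,'' whereas you actually argue (via $\Lambda\bm q^T\equiv(1,1,1,1)$) that every centralizer element is quartet-wise parity-preserving and hence contributes weight $\geq 2$ on each active quartet, which is what is really needed.
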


\begin{proof}

Let us define the operators
\begin{align}
\tilde Z_{j+1} &= \gamma_{1+4j}^\dagger \gamma_{2+4j}, \qquad \tilde X_{j+1} = \gamma_{1+4j}^\dagger \gamma_{3+4j}\nonumber \\
\tilde Q_{j+1}&=\gamma_{1+4j}^\dagger \gamma_{2+4j} \gamma_{3+4j}^\dagger \gamma_{4+4j}
\label{eq:qudit-isomorphism}
\end{align}
It is straightforward to show that $\langle\tilde X_j,\tilde Z_j\rangle$ generate the embedded Weyl group $\mathcal W_D^{\otimes k} \subseteq \text{PF}(D,2n)$ (that is, $\tilde Z_{i} \tilde X_{j} = \omega \tilde X_{j} \tilde Z_{i} \delta_{ij}$ and $\tilde X_{j}^D = \tilde Z_j^D = \openone$) and are parity-preserving.
We can treat $\mathcal L(\mathcal S_{PF}) = \langle\tilde X_j,\tilde Z_j\rangle$ as the logical operators of a stabilizer group $\mathcal S_{PF} = \langle \tilde Q_j \rangle$. This makes the purpose of the additional fourth mode (which does not appear in the logical operators) clear: without it, the stabilizer group would include a non-parity-preserving operator.
Finally, since every Weyl operator is mapped to a parafermion operator with two modes, the distance of the new code is $2d$.
\end{proof}

This mapping allow us to construct families of parafermion stabilizer codes from known families of qudit stabilizer codes. In particular, one can map the qudit toric codes \cite{Viyuela2012} (and their generalizations \cite{Tillich2009,Kovalev.Pryadko:2013}) to the corresponding parafermion code. The advantage of this mapping is that a local stabilizer generator in $d$-dimensional lattice will map to a local parafermion operator. The disadvantage is that all logical operators preserve parity, thus there is no additional protection associated with the presence of parity violating logical operators.   

It turns out that we can do a similar mapping in the opposite direction  albeit without preserving the locality of stabilizer generators.

\begin{lemma}[Doubling]
Any  parafermion stabilizer code with parameters $[[2n,k,d]]_D$ and stabilizer group $\mathcal S_{PF}$ can generate a $[[2n,2k,d']]_D$ qudit CSS code.
\end{lemma}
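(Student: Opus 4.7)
The plan is to split each parafermion stabilizer generator into a pair of qudit Weyl operators of pure $X$-type and pure $Z$-type, thereby obtaining both the CSS structure and the doubling from $k$ to $2k$ encoded qudits. Concretely, I would assemble $\mathcal S_{PF}$ into its $(n-k) \times 2n$ support matrix $S_{PF}$ over $\mathbb{Z}_D$, which by assumption satisfies $S_{PF} \Lambda S_{PF}^T = 0 \mod D$. On $2n$ qudits I would then define the CSS code with $X$-part $H_X := S_{PF}$ (generators $X^{\bm\alpha_i}$) and $Z$-part $H_Z := S_{PF}\Lambda$ (generators $Z^{\bm\alpha_i\Lambda}$). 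The CSS commutativity requirement $H_X H_Z^T = 0 \mod D$ then follows at once from
\begin{equation*}
H_X H_Z^T = S_{PF} \Lambda^T S_{PF}^T = -S_{PF} \Lambda S_{PF}^T.
\end{equation*}

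Next I would verify the code parameters. The key fact is that $\Lambda$ is invertible over $\mathbb{Z}_D$. Expanding the Pfaffian along the last column gives $\mathrm{Pf}(\Lambda^{(2n)}) = \bigl(\sum_{j=1}^{2n-1} (-1)^{j-1}\bigr)\, \mathrm{Pf}(\Lambda^{(2n-2)})$, and since the alternating sum over the $2n-1$ (odd) terms equals $1$, induction on $n$ yields $\mathrm{Pf}(\Lambda) = 1$ for every size. Hence $\det(\Lambda) = 1$, so $\mathrm{rank}(H_X) = \mathrm{rank}(H_Z) = n-k$, and since $X$-type and $Z$-type generators are automatically jointly independent, the CSS code encodes $2n - 2(n-k) = 2k$ logical qudits. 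The logical operators are read off from the parafermion logical matrix $L_{PF}$: each row $\bm\beta$ satisfies $H_Z \bm\beta^T = -S_{PF}\Lambda \bm\beta^T = 0$ and lies outside the rowspace of $H_X$, so $X^{\bm\beta}$ is an $X$-logical; its $Z$-partner can be chosen as $Z^{\bm\beta\Lambda}$, which satisfies $H_X (\bm\beta\Lambda)^T = -S_{PF}\Lambda\bm\beta^T = 0$, and the non-degenerate pairing $L_{PF}\Lambda L_{PF}^T \neq 0$ descends to the $2k$ anti-commuting CSS logical pairs.

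The main obstacle I anticipate lies in evaluating the distance $d'$. Since the $X$-logical coset representatives of the CSS code coincide with the parafermion logical coset representatives modulo $S_{PF}$ (both are affine translates of $\mathrm{rowspace}(S_{PF})$), we have the pleasant identity $d'_X = d$. The $Z$-logical cosets, however, live in $\ker(S_{PF}) / \mathrm{rowspace}(S_{PF}\Lambda)$, whose minimum weight is a classical-code quantity not determined by $d$ alone; the overall CSS distance $d' = \min(d'_X, d'_Z)$ must therefore be evaluated from the explicit matrices $H_X$ and $H_Z$ on a case-by-case basis.
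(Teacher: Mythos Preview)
Your proposal is correct and follows essentially the same construction as the paper: the paper also takes $S_{PF}$ and $S_{PF}\Lambda$ as the two blocks of a CSS check matrix (with the $X$/$Z$ roles reversed from yours, which is immaterial), invokes the full rank of $\Lambda$ to obtain $2k$ encoded qudits, and identifies $L_{PF}$ and $L_{PF}\Lambda$ as the logical operators. Your Pfaffian computation showing $\det\Lambda=1$ and your observation that $d'_X=d$ while $d'_Z$ must be computed case by case actually go beyond the paper, which merely asserts that $\Lambda$ is full rank and leaves $d'$ unspecified.
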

\begin{proof}
Consider the check matrix
\begin{align}
S_{CSS} = \begin{pmatrix}
  S_{PF} \Lambda & 0 \\
  0 & S_{PF} \\
 \end{pmatrix}.
\end{align}
For a parafermion code, $k = n - \rank(S_{PF})$ whereas for the CSS code $k' = 2n - 2 \times \rank(S_{PF}) = 2k$ ($\Lambda$ is full-rank matrix). Hence $S_{CSS}$ is the check matrix of a $[[2n,2k,d']]$ CSS code. The corresponding logical operator matrices $L_{PF}$ and $L_{PF} \Lambda$, behave like $X$- and $Z$-type logical qudit operators.

We note that this construction is a proper generalization of the doubling lemma described in \cite{Bravyi2010} which maps a Majorana fermion code to weakly self-dual CSS code. Unfortunately, for $D>2$ this mapping becomes non-local, i.e., a local qudit operator will generally map to a non-local parafermion operator.
\end{proof}

\subsection{Parafermion toric code with parity violating logical operators}
In this section, we construct parafermion analog of Kitaev's toric code \cite{Kitaev2003} for qudits \cite{Viyuela2012}. The toric code is a stabilizer code defined on a $a \times b$ lattice on the surface of a torus. A portion of the lattice is depicted in Fig.~\ref{fig:toric} where each dot represents a single qudit (hence, there are $2 a b$ qudits overall).

\begin{figure}[!htb]
\centering
\includegraphics[scale=0.9]{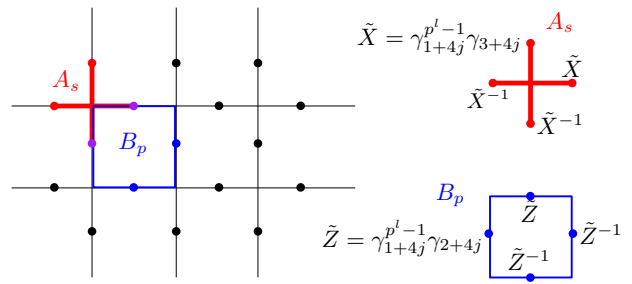}
\caption{A portion of the lattice place on torus where each dot represents four parafermion modes (the index $j \geq 0$ uniquely denotes the lattice point). On the right, parafermion star and plaquette operators $A_s$ and $B_p$ are given in detail ($p^l$ is a prime power, further details are given in Eq. (\ref{eq:toric-mapping}) ) (color online).}
\label{fig:toric}
\end{figure}

Let $D=p^{2l}$ where $p$ is a prime number and $l \in \mathbb Z^+$. The operators
\begin{align}
\tilde Z_{j+1} =& \gamma_{1+4j}^{p^l-1} \gamma_{2+4j}, \qquad \tilde X_{j+1} = \gamma_{1+4j}^{p^l-1} \gamma_{3+4j}, \nonumber \\
\tilde Q_{j+1} =& \gamma_{1+4j}^\dagger \gamma_{2+4j}^\dagger \gamma_{3+4j} \gamma_{4+4j}.
\label{eq:toric-mapping}
\end{align}
define a mapping of four parafermion modes onto a single qudit via the one-qudit stabilizer group $\mathcal S_{PF}=\langle\tilde Q_j\rangle$ and its corresponding logical operators $\mathcal L(\mathcal S_{PF})=  \langle\tilde X_j, \tilde Z_j\rangle$.

Consider the operators defined on a star-shaped and plaquette-shaped portions of the lattice:
\begin{align}
A_s = \prod_{j \in \text{star}(s)} \tilde X_j^{a_j}, \quad B_p = \prod_{j \in \text{plaquette}(p)} \tilde Z_j^{b_j},
\label{eq:toric}
\end{align}
where $a_j$ and $b_j$ are $\pm 1$, specified on the right side of Fig.~\ref{fig:toric}.
In general, $A_s$ and $B_p$ either do not overlap or overlap at two sites. One can easily verify that the construction given in Eq. (\ref{eq:toric}) ensures that the commutator $[A_s, B_p]$ vanishes in both cases. We also note that both $A_s$ and $B_p$ are parity-conserving operators. The set of all $A_s$ and $B_p$ forms a stabilizer group.

Due to the fact that the lattice is defined on the surface of a torus, the lattice is periodic in both dimensions, leading to the result
\begin{align}
\prod_s A_s = \openone, \quad \prod_p B_p = \openone.
\end{align}
This implies $|\mathcal S| = 2 (a b-1)$, and using Eq. (\ref{eq:dimensionality}), we find that $k=2$.
The logical operators ${\mathcal X_l, \mathcal Z_l}$ ($l=1,2$) are horizontal and vertical loops along the lattice, as given in Fig.~\ref{fig:toric-logical}. Since these loops go all the way through the torus, they commute with the stabilizer generators $A_s$ and $B_p$ at all sites.

\begin{figure}[!htb]
\centering
\includegraphics[scale=0.9]{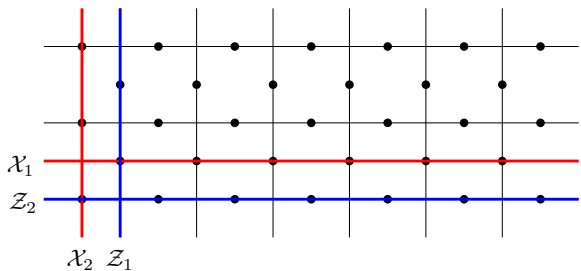}
\caption{Loops corresponding to the logical operators of the toric code (color online).}
\label{fig:toric-logical}
\end{figure}

 We note that the parity (charge) associated with operators is $p^l \neq 0 \mod D$ \footnote{The presence of parity-violating operators does not prevent quantum computation. Kitaev chain contains parity-violating operators as well, nevertheless a topological qubit can be defined by using 4 Majorana edge modes or a pair of topological regions \cite{Fu2009,Lutchyn2010,Drummond2014}. Storage and manipulation of information takes place in the code space corresponding to a given parity sector of the Hilbert space.}. Hence, the parity of the horizontal (vertical) logical operators of the parafermion toric code is $a \times p^l$ ($b \times p^l$) mod $D$. By tuning $a$ and $b$ we can ensure that one of the logical operators will violate parity (that is, $p^l$ divides $a$  but does not divide $b$). The choice of the smallest $b$ would correspond to the absence of parity violating errors. In general, $b$ can be tuned depending on the probability of parity violating errors. Therefore, this code construction combines 
topological protection of Kitaev's toric code with additional protection relying on suppression of parity violating errors.

\section{Conclusion}
We have introduced stabilizer codes in which parafermion zero modes represent the constructing blocks as opposed to qudit stabilizer codes. Our work generalizes earlier constructions based on Majorana zero modes \cite{Bravyi2010}. While it is in general possible to start with a stabilizer code for qudits and use it with parafermion zero modes through the mapping given in Eq. (\ref{eq:qudit-isomorphism}) which utilizes the embedding $\mathcal W_D^{\otimes n} \subset \text{PF}(D,4n)$, we find that there are more efficient codes in $\text{PF}(D,2n)$ requiring less number of parafermion modes as we have exemplified in Section \ref{sec:parafermion-stabilizer-codes}. These results also show that the parafermions can achive better encoding rate than Majorana fermions. We have also shown that using a similar embedding with qudit toric code it is possible to construct a code protecting parafermion modes against parity violating errors where the degree of protection (i.e. distance) can be adjusted. A similar 
construction has been introduced for color codes using Majorana zero modes \cite{Bravyi2010}. 

Parafermion stabilizer codes can be used for constructing Hamiltonians in which commuting terms correspond to stabilizer generators. Parafermion stabilizer codes thus lead to multitude of models generalizing the Kitaev's one-dimensional (1D) chain of unpaired Majorana zero modes to higher dimensions ($D>2$) and to arbitrary interactions defined by the choice of stabilizer generators. An important question arising here is related to finite temperature stability of topological order in such systems. In general, $2$-dimensional lattice with local interactions cannot lead to stable topological order at finite temperature. Thus, it could be plausible to assume that by requiring some of the logical operators to be parity violating operators one can add additional protection to topological order where this additional protection relies on superselection rules. Whether such constructions can lead to the absence of parity conserving string-like logical operators (e.g. string-like logical operators are absent in the 
Haah's code \cite{Haah:PRA2011}) is an open problem.

\section*{Acknowledgments}
We are grateful to L. Pryadko, K. Shtengel, and S. Bravyi for multiple helpful discussions.
This work was supported in part by the NSF under Grants No.~Phy-1415600 and NSF-EPSCoR 1004094.

\bibliographystyle{apsrev}
\bibliography{parafermion,extra}

\end{document}